\documentclass[letterpaper, 10pt, conference]{ieeeconf}
\IEEEoverridecommandlockouts
\usepackage{amsmath,amssymb}
\usepackage{graphicx}
\usepackage{cite}
\usepackage{algorithmic}
\usepackage{algorithm}
\usepackage{booktabs}
\usepackage{tikz}
\usepackage{verbatim}
\usepackage{url}
\usepackage{subcaption}
\usepackage{bbm}
\usepackage{hyperref}

\PassOptionsToPackage{dvipsnames}{xcolor}

\newtheorem{theorem}{Theorem}

\newtheorem{assumption}{Assumption}
\newtheorem{remark}{Remark}
\newtheorem{definition}{Definition}

\title{\LARGE \bf
  CHMAS: A Coupled Hierarchical Framework for Multi-Agent Reinforcement Learning
}

\author{Dongming Wang, Jie Xu, Yanyu Zhang, and Wei Ren
	\thanks{This work was supported by the National Science Foundation under Grant ECCS-2129949.}
    \thanks{The authors are with the Department of Electrical and
    Computer Engineering, University of California, Riverside,
    CA 92521, USA.
    {\tt\small \{wdong025, jxu150, yzhan831, wei.ren\}@ucr.edu}.}%
}

\begin{document}

\maketitle
\thispagestyle{empty}
\pagestyle{empty}

%%%%%%%%%%%%%%%%%%%%%%%%%%%%%%%%%%%%%%%%%%%%%%%%%%%%%%%%%%%%%%%%
\begin{abstract}
  Multi-agent reinforcement learning (MARL) systems face fundamental
  challenges in balancing global coordination with local execution
  across different temporal scales. This paper introduces the Coupled
  Hierarchical Multi-Agent System (CHMAS), a novel framework that
  decomposes multi-agent decision-making into centralized strategic
  planning and distributed tactical execution with bidirectional
  information flow. The strategic layer integrates all agents' states
  with an exclusive global environmental state to generate guidance
  actions every $T$ timesteps, while tactical agents execute
  distributed policies augmented by strategic guidance and local
  neighborhood observations. Unlike existing hierarchical approaches
  with unidirectional control, CHMAS establishes a feedback mechanism
  where accumulated tactical rewards influence strategic objectives
  through a coupling coefficient $\lambda$, ensuring strategic plans
  remain grounded in tactical feasibility. To address the
  non-stationarity inherent in hierarchical learning, we propose an
  asynchronous update protocol where strategic parameters update every
  $N_f$ tactical episodes, allowing tactical policies to converge to
  quasi-stationary points between strategic changes. We present both a
  general bi-level formulation capturing full system dynamics and a
  tractable additive approximation enabling rigorous analysis.
  Theoretical analysis proves that this asynchronous scheme achieves
  $\mathcal{O}(\log K/\sqrt{K})$ convergence for the strategic layer
  after $K$ strategic updates under standard assumptions. Experimental
  validation in a multi-agent foraging domain demonstrates successful
  learning of spatially partitioned exploration strategies, with both
  layers converging stably despite hierarchical coupling.
\end{abstract}

%%%%%%%%%%%%%%%%%%%%%%%%%%%%%%%%%%%%%%%%%%%%%%%%%%%%%%%%%%%%%%%%
\section{INTRODUCTION}

Multi-agent reinforcement learning (MARL) has emerged as a fundamental
paradigm for autonomous coordination in complex environments, with
applications spanning robotics \cite{nguyen2020deep}, traffic management
\cite{wiering2000multi}, and resource allocation \cite{qu2020scalable}.
Despite significant progress, real-world deployment faces a critical
challenge: coordinating distributed agents across multiple temporal and
spatial scales while maintaining computational tractability.

The curse of dimensionality renders purely centralized approaches
intractable. For $N$ agents with state spaces $\mathcal{S}_i$ and
action spaces $\mathcal{A}_i$, the joint space grows as
$O(\prod_{i=1}^N |\mathcal{S}_i| \cdot |\mathcal{A}_i|)$
\cite{zhang2021multi}. While distributed methods address scalability
through local decision-making \cite{zhang2018fully}, they sacrifice
coordination quality for long-horizon objectives. Traditional approaches
like QMIX \cite{rashid2020monotonic} and MADDPG \cite{lowe2017multi}
leverage centralized training with decentralized execution, but assume
all agents access relevant global information during training, which
often fails in practice due to partial observability and communication
constraints \cite{hernandez2019survey}.

Furthermore, multi-agent systems inherently operate across multiple
temporal scales. Strategic decisions (formation control, task
allocation) require global information including environmental states
inaccessible to individual agents, operating on extended timescales.
Tactical responses (collision avoidance, local adaptation) demand
rapid reactions based on immediate observations. While hierarchical
reinforcement learning offers a natural framework for multi-scale
decision-making \cite{sutton1999between, bacon2017option}, existing
hierarchical MARL approaches \cite{pateria2021hierarchical,
vezhnevets2017feudal} employ unidirectional control where high-level
policies dictate low-level actions, missing opportunities for tactical
feedback to inform strategic planning.

The interaction between hierarchical levels creates complex learning
dynamics. Strategic changes fundamentally alter the environment
perceived by tactical agents, violating stationarity assumptions
\cite{hernandez2019survey}. Conversely, evolving tactical policies
change execution characteristics that strategic planners rely upon.
Previous hierarchical MARL work \cite{kulkarni2016hierarchical,
nachum2018data} addresses this through reward engineering or option
learning but requires domain expertise and often fails to generalize.

This paper introduces CHMAS, which addresses these limitations through
bidirectional coupling between strategic and tactical layers at
different temporal scales. The strategic level maintains centralized
access to both agents' states and an exclusive global environmental
state $\mathcal{S}^{\text{env}}$ capturing system-wide properties
essential for coordination. Strategic guidance, generated every $T$
timesteps, broadcasts to tactical agents who execute distributed
policies based on augmented local observations. Tactical rewards
accumulate and influence strategic objectives through coupling
coefficient $\lambda$, creating feedback that grounds strategic
planning in tactical realities. An asynchronous protocol with
strategic updates every $N_f$ tactical episodes manages non-stationary
dynamics, allowing tactical convergence between strategic changes.

Our contributions are threefold. First, we formalize the hierarchical
MARL problem with bidirectional coupling and information asymmetry,
providing both a general bi-level formulation and tractable additive
approximation. Second, we develop the Asynchronous Hierarchical Policy
Gradient (AHPG) algorithm managing non-stationarity through temporal
separation while maintaining $\mathcal{O}(\log K/\sqrt{K})$
convergence guarantees for the strategic layer. Third, we validate the
framework empirically, demonstrating stable bidirectional learning in a
cooperative multi-agent foraging task.

%%%%%%%%%%%%%%%%%%%%%%%%%%%%%%%%%%%%%%%%%%%%%%%%%%%%%%%%%%%%%%%%
\section{PRELIMINARIES}

We consider a cooperative multi-agent system with $N$ agents operating
in a shared environment. The system employs a hierarchical architecture
with a strategic layer that accesses global information at low temporal
frequency and a tactical layer that operates distributedly at high
frequency using only local information.

\subsection{Multi-Agent Communication Graph}

The tactical agents interact through a time-invariant communication
graph $\mathcal{G} = (\mathcal{V}, \mathcal{E})$ where
$\mathcal{V} = \{1, 2, \ldots, N\}$ represents the agent set and
$\mathcal{E} \subseteq \mathcal{V} \times \mathcal{V}$ denotes the
edge set \cite{olfati2007consensus,nedic2010constrained}. For each
agent $i$, the neighborhood $\mathcal{N}_i = \{j \in \mathcal{V} :
(i,j) \in \mathcal{E}\}$ is the set of agents that can directly
communicate with agent $i$. This graph constrains information flow at
the tactical level, ensuring distributed execution remains scalable.

\subsection{Strategic MDP}

\begin{definition}[Strategic MDP]
  The strategic layer evolves as
  $\mathcal{M}^{\text{str}} = (\mathcal{S}^{\text{str}},
  \mathcal{A}^{\text{str}}, P^{\text{str}}, R^{\text{str}},
  \gamma_{\text{str}})$, where:
  (i)~the state $\mathcal{S}^{\text{str}} = \mathcal{S}^{\text{env}}
  \times \prod_{i=1}^N \mathcal{S}_i$ combines the exclusive global
  environmental state $\mathcal{S}^{\text{env}}$ (resource
  distributions, global objectives) with all agent states, so that at
  epoch $k$ the strategic state is
  $s^{\text{str}}_k = (s^{\text{env}}_k, s_{1,k}, \ldots, s_{N,k})$;
  (ii)~actions $\mathcal{A}^{\text{str}} = \prod_{i=1}^N
  \mathcal{A}^{\text{str}}_i$ provide personalized guidance
  $a^{\text{str}}_{i,k}$ to each agent, persisting for the interval
  $[kT,(k+1)T)$; and (iii)~the reward integrates global objectives
  with tactical feedback:
  \begin{equation}
    \label{eq:strategic_reward}
    R^{\text{str}}(s^{\text{str}}_k, \mathbf{a}^{\text{str}}_k)
    = R^{\text{global}}(s^{\text{env}}_k, \mathbf{a}^{\text{str}}_k)
    + \lambda \sum_{t=kT}^{(k+1)T-1} \sum_{i=1}^N r^{\text{tac}}_{i,t},
  \end{equation}
  where $\lambda \in [0,1]$ controls the influence of tactical
  execution quality on strategic evaluation.
\end{definition}

The strategic policy $\pi^{\text{str}}: \mathcal{S}^{\text{str}}
\rightarrow \mathcal{P}(\mathcal{A}^{\text{str}})$ maps complete state
observations to guidance distributions, encoding high-level
coordination logic that individual agents cannot compute from local
information alone.

\subsection{Tactical Multi-Agent MDP}

\begin{definition}[Tactical MAMDP]
  Given fixed strategic guidance $\mathbf{a}^{\text{str}}_k$, the
  tactical layer evolves as
  $\mathcal{M}^{\text{tac}} = (\{\mathcal{S}_i\},
  \{\mathcal{O}^{\text{tac}}_i\}, \{\mathcal{A}^{\text{tac}}_i\},
  P^{\text{tac}}, \{R^{\text{tac}}_i\}, \gamma_{\text{tac}})$, where:
  (i)~each agent $i$ observes
  \begin{equation}
    \label{eq:tactical_obs}
    o^{\text{tac}}_{i,t}
    = \phi_i(s_{i,t}, \{s_{j,t}\}_{j \in \mathcal{N}_i},
      a^{\text{str}}_{i,k}) \in \mathcal{O}^{\text{tac}}_i,
  \end{equation}
  combining local state, neighbor states, and strategic guidance
  ($k = \lfloor t/T \rfloor$);
  (ii)~the joint transition factorizes over the communication graph:
  $P^{\text{tac}}(\mathbf{s}_{t+1} | \mathbf{s}_t,
  \mathbf{a}^{\text{tac}}_t) = \prod_{i=1}^N
  P_i(s_{i,t+1} | s_{i,t},
  \{s_{j,t}, a^{\text{tac}}_{j,t}\}_{j \in \mathcal{N}_i \cup \{i\}})$,
  maintaining tractability as $N$ grows \cite{zhang2021multi}; and
  (iii)~the per-agent reward
  \begin{equation}
    \label{eq:tactical_reward}
    R^{\text{tac}}_i(s_{i,t}, a^{\text{tac}}_{i,t},
    \{s_{j,t}, a^{\text{tac}}_{j,t}\}_{j \in \mathcal{N}_i},
    a^{\text{str}}_{i,k})
  \end{equation}
  enables local optimization and neighbor coordination while
  incorporating strategic objectives.
\end{definition}

\subsection{Bidirectional Coupling}

Unlike traditional top-down hierarchies \cite{pateria2021hierarchical},
CHMAS establishes feedback in both directions. \emph{Downward coupling}
(strategic $\to$ tactical): guidance $a^{\text{str}}_{i,k}$ shapes
tactical behavior through observation augmentation
\eqref{eq:tactical_obs}, reward modulation \eqref{eq:tactical_reward},
and temporal persistence over $T$ timesteps. \emph{Upward coupling}
(tactical $\to$ strategic): accumulated rewards
$R^{\text{tac,cum}}_k = \sum_{t=kT}^{(k+1)T-1} \sum_{i=1}^N
r^{\text{tac}}_{i,t}$ influence strategic rewards via $\lambda$ in
\eqref{eq:strategic_reward}, so that strategic decisions account for
tactical feasibility. Setting $\lambda = 0$ recovers a purely top-down
hierarchy. The framework maintains asymmetric information access:
the strategic layer observes complete state $s^{\text{str}}_k$ while
tactical agents access only $\mathcal{N}_i$, with strategic guidance
serving as an information bottleneck \cite{tishby2000information}.

\section{PROBLEM FORMULATION}\label{sec:pf}

\subsection{General Coupled Formulation}

The bidirectional coupling creates a bi-level optimization structure
\cite{colson2007overview,sinha2018review}. The strategic objective:
\begin{equation}
  \label{eq:strategic_objective}
  J^{\text{str}}(\theta^{\text{str}} | \{\theta^{\text{tac}}_i\})
  = \mathbb{E}\!\left[\sum_{k=0}^{\infty} \gamma^k_{\text{str}}
    R^{\text{str}}(s^{\text{str}}_k, \mathbf{a}^{\text{str}}_k)\right]
\end{equation}
depends on tactical policies through \eqref{eq:strategic_reward}.
Each tactical objective:
\begin{equation}
  \label{eq:tactical_objective}
  J^{\text{tac}}_i(\theta^{\text{tac}}_i | \theta^{\text{str}})
  = \mathbb{E}\!\left[\sum_{t=0}^{\infty} \gamma^t_{\text{tac}}
    r^{\text{tac}}_{i,t}\right]
\end{equation}
depends on strategic parameters through \eqref{eq:tactical_obs} and
\eqref{eq:tactical_reward}.

\begin{definition}[Coupled Hierarchical Problem]
  The coupled optimization problem is the bi-level program:
  \begin{align*}
    \max_{\theta^{\text{str}}} \quad
    & J^{\text{str}}(\theta^{\text{str}} |
      \{\theta^{\text{tac}*}_i(\theta^{\text{str}})\}) \\
    \text{s.t.} \quad
    & \theta^{\text{tac}*}_i(\theta^{\text{str}})
    = \arg\max_{\theta^{\text{tac}}_i}
      J^{\text{tac}}_i(\theta^{\text{tac}}_i | \theta^{\text{str}}),
      \quad \forall i \in \mathcal{V},
  \end{align*}
  where the strategic optimization accounts for tactical best responses.
\end{definition}

\begin{remark}
  We operate in the cooperative MARL setting where agents share a
  common team objective. Accordingly, the per-agent best responses
  $\theta^{\text{tac}*}_i$ are aligned with the joint team optimum,
  and no equilibrium-seeking across agents is required at the tactical
  level \cite{zhang2021multi}.
\end{remark}

This Stackelberg structure is computationally intractable in general
\cite{hong2023two}, motivating the approximation below.

\subsection{Additive Approximation for Tractable Optimization}

\begin{definition}[Additive Hierarchical Problem]
  Under the additive approximation, the global objective is:
  \begin{equation}
    \label{eq:additive}
    J^{\text{global}}(\theta^{\text{str}}, \{\theta^{\text{tac}}_i\})
    = J^{\text{str}}(\theta^{\text{str}})
    + \sum_{i=1}^N J^{\text{tac}}_i(\theta^{\text{tac}}_i),
  \end{equation}
  where each term is evaluated with fixed parameters of the other layer.
\end{definition}

\begin{remark}
  Equation~\eqref{eq:additive} is an approximation of the true coupled
  objective: the cross-layer dependencies in
  \eqref{eq:strategic_objective}--\eqref{eq:tactical_objective} are not
  eliminated but handled implicitly through the asynchronous update
  schedule. The approximation error is controlled by the tactical
  optimality gap $\delta_k$ which will be defined in Assumption~\ref{assum:tac} at each epoch,
  which vanishes as learning progresses.
\end{remark}

The separable gradient structure,
\begin{equation}
  \label{eq:gradient}
  \nabla_{\theta} J^{\text{global}}
  = \bigl[\nabla_{\theta^{\text{str}}} J^{\text{str}},\;
    \nabla_{\theta^{\text{tac}}} J^{\text{tac}}\bigr]^T,
\end{equation}
allows updates to be treated as coordinate steps in the joint parameter
space, enabling rigorous convergence analysis via two-timescale
stochastic approximation \cite{borkar1997stochastic,
nesterov2012efficiency}.

%%%%%%%%%%%%%%%%%%%%%%%%%%%%%%%%%%%%%%%%%%%%%%%%%%%%%%%%%%%%%%%%
\section{ASYNCHRONOUS LEARNING ALGORITHM}

\subsection{Algorithm Design Principles}

AHPG leverages three principles for stable hierarchical learning.

\emph{Temporal separation.} Strategic decisions occur every $T$
timesteps and persist throughout the interval. Strategic parameters
follow a decaying step size schedule $\eta^{\text{str}}_k =
\alpha/\sqrt{k}$ and update only every $N_f$ tactical episodes,
operating at a slower effective timescale than tactical learning. This
allows tactical policies to converge to near-optimal responses between
strategic changes, transforming the non-stationary hierarchical problem
into alternating stationary subproblems.

\emph{Structured information flow.} Downward flow provides
coordination signals through strategic guidance without requiring
centralized control at each timestep. Upward flow aggregates tactical
performance $R^{\text{tac}}_k = \sum_{t,i} r^{\text{tac}}_{i,t}$ into
strategic rewards, keeping high-level plans grounded in execution
realities. The strategic layer maintains exclusive access to
$s^{\text{env}}_k$, preserving the information asymmetry that enables
scalable distributed execution.

\emph{Distributed tactical learning.} Each tactical agent maintains
its own policy $\pi^{\text{tac}}_i$ and experience buffer
$\mathcal{D}^{\text{tac}}_i$, updating parameters using only local
gradients without inter-agent communication during learning, so that
tactical computation scales linearly with $N$.

\subsection{Algorithm Specification}

Algorithm~\ref{alg:ahpg} presents the complete AHPG procedure.
Strategic guidance is generated at the start of each new strategic
epoch (every $N_f$ episodes), and strategic parameters are updated at
the end of each epoch using the accumulated tactical performance.
Tactical parameters update every episode.

\begin{algorithm}[h]
  \caption{Asynchronous Hierarchical Policy Gradient (AHPG)}
  \label{alg:ahpg}
  \begin{algorithmic}[1]
    \REQUIRE Initial parameters $\theta^{\text{str}}_1$,
      $\{\theta^{\text{tac}}_{i,0}\}$
    \REQUIRE Hyperparameters $T$, $N_f$, $\alpha$, $\eta^{\text{tac}}$,
      $\lambda$, $c$
    \STATE Initialize episode counter $e \leftarrow 0$, strategic
      epoch $k \leftarrow 1$
    \FOR{episode $e = 1, 2, \ldots$}
      \STATE \textbf{// Strategic Planning Phase}
      \IF{$e \bmod N_f = 1$}
        \STATE Observe full state
          $s^{\text{str}}_k = (s^{\text{env}}_k, \mathbf{s}_k)$
        \STATE Generate
          $\mathbf{a}^{\text{str}}_k \sim
           \pi^{\text{str}}(s^{\text{str}}_k;\, \theta^{\text{str}}_k)$
        \STATE Broadcast $a^{\text{str}}_{i,k}$ to each agent $i$
      \ENDIF
      \STATE \textbf{// Tactical Execution Phase}
      \FOR{$t = 0$ to $T-1$}
        \FOR{agent $i \in \mathcal{V}$ in parallel}
          \STATE $o^{\text{tac}}_{i,t} =
            \phi_i(s_{i,t}, \{s_{j,t}\}_{j \in \mathcal{N}_i},
            a^{\text{str}}_{i,k})$
          \STATE $a^{\text{tac}}_{i,t} \sim
            \pi^{\text{tac}}_i(o^{\text{tac}}_{i,t};\,
            \theta^{\text{tac}}_{i,e})$
          \STATE Store $(o^{\text{tac}}_{i,t}, a^{\text{tac}}_{i,t},
            r^{\text{tac}}_{i,t})$ in $\mathcal{D}^{\text{tac}}_i$
        \ENDFOR
      \ENDFOR
      \STATE \textbf{// Tactical Policy Update}
      \FOR{agent $i \in \mathcal{V}$ in parallel}
        \STATE $g^{\text{tac}}_i =
          \nabla_{\theta^{\text{tac}}_i} J^{\text{tac}}_i$
          using $\mathcal{D}^{\text{tac}}_i$
        \STATE $\theta^{\text{tac}}_{i,e+1} =
          \theta^{\text{tac}}_{i,e} + \eta^{\text{tac}} g^{\text{tac}}_i$
      \ENDFOR
      \STATE \textbf{// Strategic Update}
      \IF{$e \bmod N_f = 0$}
        \STATE $R^{\text{tac}}_k =
          \sum_{t=kT}^{(k+1)T-1} \sum_{i=1}^N r^{\text{tac}}_{i,t}$
        \STATE $r^{\text{str}}_k =
          R^{\text{global}}(s^{\text{env}}_k, \mathbf{a}^{\text{str}}_k)
          + \lambda R^{\text{tac}}_k$
        \STATE $g^{\text{str}} =
          \nabla_{\theta^{\text{str}}} J^{\text{str}}$
          using $\mathcal{D}^{\text{str}}$
        \STATE $\theta^{\text{str}}_{k+1} = \theta^{\text{str}}_k
          + (\alpha/\sqrt{k})\, g^{\text{str}}$
        \STATE $k \leftarrow k + 1$
      \ENDIF
    \ENDFOR
  \end{algorithmic}
\end{algorithm}

During the $N_f$ episodes between strategic updates, tactical agents
operate in a stationary environment with fixed guidance, converging
toward near-optimal policies. This ensures strategic updates observe
stable tactical performance that accurately estimates the tactical
contribution to strategic rewards.

For learning components, the strategic layer employs centralized policy
gradient algorithms (e.g., PPO \cite{schulman2017proximal}) leveraging
complete state access, while tactical agents use distributed policy
gradient or actor-critic variants operating on local observations. The
framework naturally extends to continuous control through continuous
guidance vectors and to partially observable settings through recurrent
policies.

\section{CONVERGENCE ANALYSIS}

We establish convergence for AHPG under the additive approximation
$J^{\text{global}}(\theta) = J^{\text{str}}(\theta^{\text{str}})
+ \sum_{i=1}^N J^{\text{tac}}_i(\theta^{\text{tac}}_i)$.

\begin{assumption}
  \label{assum:reg}
  \begin{enumerate}
    \item \textbf{Smoothness}: $J^{\text{str}}$ and each
      $J^{\text{tac}}_i$ are $L$-smooth with $L$-Lipschitz gradients.
    \item \textbf{Boundedness}: $J^{\text{str}} \le J^{\text{str}*}$
      and $J^{\text{tac}}_i \le J^{\text{tac}*}_i$ for all $i$.
    \item \textbf{Bounded variance}: Stochastic gradient estimates
      satisfy
      $\mathbb{E}[\|g^{\text{str}}_k - \nabla J^{\text{str}}_k\|^2]
      \le \sigma^2_{\text{str}}$ and
      $\mathbb{E}[\|g^{\text{tac}}_{i,e}
      - \nabla J^{\text{tac}}_i\|^2] \le \sigma^2_{\text{tac}}$.
    \item \textbf{Biased strategic gradient}:
      $\mathbb{E}[g^{\text{str}}_k \mid \theta^{\text{str}}_k]
      = \nabla J^{\text{str}}(\theta^{\text{str}}_k) + b_k$,
      where $b_k$ captures the error from evaluating strategic
      gradients with non-converged tactical parameters.
  \end{enumerate}
\end{assumption}

\begin{assumption}
  \label{assum:tac}
  \begin{enumerate}
    \item \textbf{PL condition}: Each $J^{\text{tac}}_i$ satisfies the
      $\mu$-PL inequality:
      $\|\nabla J^{\text{tac}}_i(\theta)\|^2 \ge
      2\mu(J^{\text{tac}*}_i - J^{\text{tac}}_i(\theta))$.
    \item \textbf{Warm-start}: At the beginning of each strategic epoch
      $k$, tactical agents are initialized from the converged state of
      the previous epoch. The initial optimality gap satisfies
      $\mathbb{E}[J^{\text{tac}*}_i
      - J^{\text{tac}}_i(\theta^{\text{tac}}_{i,kN_f})] \le \delta_k$,
      where $\delta_k = \mathcal{O}(\eta^{\text{str}}_k)
      = \mathcal{O}(1/\sqrt{k})$, reflecting the magnitude of the
      strategic parameter change between successive epochs.
    \item \textbf{Round-robin updates}: Within each tactical phase,
      agents update sequentially with others' parameters held fixed,
      so that the PL condition applies to each per-agent landscape
      independently.
  \end{enumerate}
\end{assumption}

\begin{remark}
  The PL condition in Assumption~\ref{assum:tac} holds for log-linear
  (softmax) policies in tabular settings \cite{agarwal2020optimality} and
  locally for over-parameterized neural networks near stationary points.
  The round-robin assumption is a standard device to decouple per-agent
  landscapes; it can be relaxed to simultaneous updates at the cost of
  a more involved coupling analysis.
\end{remark}

\begin{assumption}[Coupling structure]
  \label{assum:coupling}
  The strategic gradient is $L_b$-Lipschitz in the tactical parameters:
  \[
    \|\nabla J^{\text{str}}(\theta^{\text{str}},
      \boldsymbol{\theta}^{\text{tac}})
     - \nabla J^{\text{str}}(\theta^{\text{str}},
      \boldsymbol{\theta}^{\text{tac}*})\|
    \le L_b \|\boldsymbol{\theta}^{\text{tac}}
               - \boldsymbol{\theta}^{\text{tac}*}\|.
  \]
\end{assumption}

\begin{theorem}[AHPG Convergence]
  \label{thm:convergence}
  Under Assumptions~\ref{assum:reg}--\ref{assum:coupling}, with
  $\eta^{\text{str}}_k = \alpha/\sqrt{k}$ where $\alpha \le 1/(8L)$,
  $\eta^{\text{tac}} = \beta/\sqrt{K}$ where $\beta < 1/(2L)$, and
  $N_f = c\sqrt{K}$ for a constant $c > 0$, after $K$ strategic
  updates AHPG achieves:
  \[
    \min_{k \in \{1,\ldots,K\}}
    \mathbb{E}[\|\nabla J^{\text{str}}(\theta^{\text{str}}_k)\|^2]
    = \mathcal{O}\!\left(\frac{\log K}{\sqrt{K}}\right),
  \]
  \[
    \frac{1}{KN_f}\sum_{k=1}^{K}\sum_{e=kN_f}^{(k+1)N_f-1}
    \sum_{i=1}^N
    \mathbb{E}[\|\nabla J^{\text{tac}}_i(\theta^{\text{tac}}_{i,e})\|^2]
    = \mathcal{O}\!\left(\frac{1}{\sqrt{K}}\right).
  \]
\end{theorem}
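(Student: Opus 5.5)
The plan is to handle the two layers separately, using the separable gradient structure \eqref{eq:gradient} of the additive approximation. I would first analyze the tactical layer inside one strategic epoch. That analysis gives a bound on the tactical optimality gap, and the gap then controls the strategic bias $b_k$. The second step feeds this bias into a standard biased SGD descent argument for $J^{\text{str}}$ with step sizes $\alpha/\sqrt{k}$.

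\textbf{Tactical layer.} Fix epoch $k$. By the round-robin part of Assumption~\ref{assum:tac}, each $J^{\text{tac}}_i$ is a fixed $L$-smooth, $\mu$-PL function of $\theta^{\text{tac}}_i$ during the epoch. The ascent lemma with $\eta^{\text{tac}}=\beta/\sqrt K\le 1/(2L)$ and the bounded variance of Assumption~\ref{assum:reg} give
\[
\mathbb{E}[J^{\text{tac}}_i(\theta_{i,e+1})]\ge \mathbb{E}[J^{\text{tac}}_i(\theta_{i,e})]+\tfrac{\eta^{\text{tac}}}{2}\mathbb{E}\|\nabla J^{\text{tac}}_i(\theta_{i,e})\|^2-\tfrac{L(\eta^{\text{tac}})^2}{2}\sigma^2_{\text{tac}}.
\]
I would use this inequality in two ways. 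Summed over the $N_f$ episodes of the epoch, it bounds the average squared tactical gradient by
\[
\frac{2\delta_k}{\eta^{\text{tac}}N_f}+L\eta^{\text{tac}}\sigma^2_{\text{tac}}
=\mathcal O\!\left(\frac{\delta_k}{\beta c}+\frac{1}{\sqrt K}\right).
\]
Here the warm-start bound supplies $\delta_k$ as the initial gap. Averaging over $k$ and using $\frac1K\sum_k\delta_k=\mathcal O(1/\sqrt K)$, then multiplying by $N$, yields the second claimed rate.

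Combined with PL, the same inequality gives linear contraction of the gap, $(1-\mu\eta^{\text{tac}})^{N_f}\delta_k+\mathcal O(\eta^{\text{tac}}\sigma^2_{\text{tac}}/\mu)$. Since $\eta^{\text{tac}}N_f=\beta c$ is a constant, the end-of-epoch gap is $\mathcal O(\delta_k+1/\sqrt K)$. The PL quadratic-growth property then converts this gap into $\mathbb{E}\|\boldsymbol\theta^{\text{tac}}-\boldsymbol\theta^{\text{tac}*}\|^2=\mathcal O(\delta_k+1/\sqrt K)/\mu$. Assumption~\ref{assum:coupling} then gives
\[
\|b_k\|^2\le L_b^2\,\mathbb{E}\|\boldsymbol\theta^{\text{tac}}-\boldsymbol\theta^{\text{tac}*}\|^2=\mathcal O(1/\sqrt k+1/\sqrt K).
\]

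\textbf{Strategic layer.} Apply $L$-smoothness to the ascent step $\theta_{k+1}=\theta_k+\eta_k g_k$ with $\mathbb{E}g_k=\nabla J+b_k$. Young's inequality gives $\langle\nabla J,b_k\rangle\ge-\tfrac14\|\nabla J\|^2-\|b_k\|^2$. Bounding $\mathbb{E}\|g_k\|^2\le 2\|\nabla J\|^2+2\|b_k\|^2+\sigma^2_{\text{str}}$ and using $\eta_k\le 1/(8L)$ gives
\[
\mathbb{E}J(\theta_{k+1})\ge \mathbb{E}J(\theta_k)+\tfrac{\eta_k}{4}\mathbb{E}\|\nabla J(\theta_k)\|^2-2\eta_k\|b_k\|^2-L\eta_k^2\sigma^2_{\text{str}}.
\]
Telescoping and using $J\le J^{\text{str}*}$ gives
\[
\sum_k\eta_k\mathbb{E}\|\nabla J\|^2\le 4(J^*-J(\theta_1))+8\sum_k\eta_k\|b_k\|^2+4L\sigma^2\sum_k\eta_k^2.
\]
The last sum is $\alpha^2\sum 1/k=\mathcal O(\log K)$. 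The bias sum $\sum_k \frac{\alpha}{\sqrt k}\bigl(\frac1{\sqrt k}+\frac1{\sqrt K}\bigr)$ is $\mathcal O(\log K)+\mathcal O(1)$. Dividing by $\sum\eta_k\ge\alpha\sqrt K$ and bounding the minimum by the weighted average gives $\mathcal O(\log K/\sqrt K)$.

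\textbf{Main obstacle.} The hardest step is making the bias bound rigorous. It requires linking $\|b_k\|$ to the tactical gap via quadratic growth, which relies on the PL optimizer set being well defined, and bounding $\mathbb{E}\|b_k\|^2$ rather than $\|\mathbb{E}\,\cdot\|$. I would handle this by conditioning on the end-of-epoch tactical iterate and applying Jensen's inequality. The warm-start assumption is essential here: it prevents the gaps from accumulating across epochs. Without it one would need a recursion $\delta_{k+1}\lesssim\rho\delta_k+L\eta_k$, and I would fall back on that recursion, with $\rho=e^{-\mu\beta c}<1$, to recover $\delta_k=\mathcal O(1/\sqrt k)$.
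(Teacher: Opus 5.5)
Your proposal is correct and follows the paper's proof essentially step for step. It uses the same $L$-smooth ascent bound with Young's inequality and $\eta^{\text{str}}_k\le 1/(8L)$ for the biased strategic step, and the same PL contraction over $N_f=c\sqrt K$ tactical steps (with $\eta^{\text{tac}}N_f=\beta c$ constant) to obtain $\|b_k\|^2=\mathcal O(\delta_k+1/\sqrt K)$ via Assumption~\ref{assum:coupling}. The tactical rate also comes the same way, from per-epoch telescoping plus averaging of $\delta_k$. Your explicit appeal to PL quadratic growth to pass from the optimality gap to $\|\boldsymbol\theta^{\text{tac}}-\boldsymbol\theta^{\text{tac}*}\|^2$, and your lower bound $\sum_k\eta^{\text{str}}_k\ge\alpha\sqrt K$, fill in steps the paper leaves implicit rather than changing the route.
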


	\begin{proof}
		\textbf{Strategic layer.}
		By $L$-smoothness of $J^{\text{str}}$ and the update
		$\theta^{\text{str}}_{k+1} = \theta^{\text{str}}_k
		+ \eta^{\text{str}}_k g^{\text{str}}_k$:
		\begin{equation*}
		J^{\text{str}}_{k+1}
		\ge J^{\text{str}}_k
		+ \eta^{\text{str}}_k
		\langle\nabla J^{\text{str}}_k,\, g^{\text{str}}_k\rangle
		- \tfrac{L}{2}(\eta^{\text{str}}_k)^2
		\|g^{\text{str}}_k\|^2.
		\end{equation*}
		Taking expectation with
		$\mathbb{E}[g^{\text{str}}_k] = \nabla J^{\text{str}}_k + b_k$,
		we bound the two terms. For the inner product, applying Young's
		inequality $\langle a, b\rangle \ge
		-\tfrac{1}{4}\|a\|^2 - \|b\|^2$:
		\begin{equation*}
		\mathbb{E}[\langle\nabla J^{\text{str}}_k,
		g^{\text{str}}_k\rangle]
		\ge \tfrac{3}{4}\|\nabla J^{\text{str}}_k\|^2
		- \|b_k\|^2.
		\end{equation*}
		For the squared norm, using
		$\mathbb{E}[\|g^{\text{str}}_k\|^2]
		\le \|\nabla J^{\text{str}}_k + b_k\|^2 + \sigma^2_{\text{str}}
		\le 2\|\nabla J^{\text{str}}_k\|^2
		+ 2\|b_k\|^2 + \sigma^2_{\text{str}}$:
		\begin{align*}
		\mathbb{E}[J^{\text{str}}_{k+1}]
		&\ge J^{\text{str}}_k
		+ \!\left(\tfrac{3}{4}\eta^{\text{str}}_k
		- L(\eta^{\text{str}}_k)^2\right)
		\|\nabla J^{\text{str}}_k\|^2 \\
		&\quad
		- \!\left(\eta^{\text{str}}_k
		+ L(\eta^{\text{str}}_k)^2\right)\|b_k\|^2
		- \tfrac{L}{2}(\eta^{\text{str}}_k)^2
		\sigma^2_{\text{str}}.
		\end{align*}
		Since $\eta^{\text{str}}_k \le 1/(8L)$, we have
		$L(\eta^{\text{str}}_k)^2 \le \eta^{\text{str}}_k/8$, so
		$\tfrac{3}{4}\eta^{\text{str}}_k - L(\eta^{\text{str}}_k)^2
		\ge \tfrac{5}{8}\eta^{\text{str}}_k \ge \tfrac{1}{8}\eta^{\text{str}}_k$
		and $\eta^{\text{str}}_k + L(\eta^{\text{str}}_k)^2
		\le \tfrac{9}{8}\eta^{\text{str}}_k \le 2\eta^{\text{str}}_k$.
		Rearranging and telescoping over $k=1,\ldots,K$:
		\begin{align}
		\label{eq:str_telescope}
		\frac{1}{8}\sum_{k=1}^K \eta^{\text{str}}_k
		\mathbb{E}[\|\nabla J^{\text{str}}_k\|^2]
		&\le \Delta J^{\text{str}}
		+ 2\sum_{k=1}^K \eta^{\text{str}}_k \|b_k\|^2
		\nonumber\\
		&\quad
		+ \frac{L\sigma^2_{\text{str}}}{2}
		\sum_{k=1}^K (\eta^{\text{str}}_k)^2,
		\end{align}
		where $\Delta J^{\text{str}} = J^{\text{str}*} - J^{\text{str}}_1$.
		
		\textbf{Bias bound.}
		By Assumption~\ref{assum:coupling},
		$\|b_k\|^2 \le L_b^2 \|\boldsymbol{\theta}^{\text{tac}}_{(k+1)N_f}
		- \boldsymbol{\theta}^{\text{tac}*}\|^2$.
		Running $N_f = c\sqrt{K}$ SGD steps under the PL condition
		(Assumption~\ref{assum:tac}) with step size $\beta/\sqrt{K}$ and
		warm-start gap $\delta_k$ yields contraction factor
		$\rho = (1 - 2\mu\beta/\sqrt{K})^{c\sqrt{K}}
		\to e^{-2\mu\beta c} < 1$,
		so the residual tactical deviation satisfies
		$\|b_k\|^2 = \mathcal{O}(\delta_k + 1/\sqrt{K})
		= \mathcal{O}(1/\sqrt{k})$.
		
		\textbf{Completing the strategic bound.}
		With $\eta^{\text{str}}_k = \alpha/\sqrt{k}$:
		\begin{align*}
		\sum_{k=1}^K \eta^{\text{str}}_k\|b_k\|^2
		&= \mathcal{O}\!\left(\alpha\sum_{k=1}^K
		\tfrac{1}{\sqrt{k}} \cdot \tfrac{1}{\sqrt{k}}\right)
		= \mathcal{O}(\log K),\\
		\sum_{k=1}^K(\eta^{\text{str}}_k)^2
		&= \alpha^2\sum_{k=1}^K \tfrac{1}{k}
		= \mathcal{O}(\log K).
		\end{align*}
		Dividing~\eqref{eq:str_telescope} by
		$\sum_{k=1}^K \eta^{\text{str}}_k
		= \mathcal{O}(\alpha\sqrt{K})$ gives
		$\min_k \mathbb{E}[\|\nabla J^{\text{str}}_k\|^2]
		= \mathcal{O}(\log K/\sqrt{K})$.
		
		\textbf{Tactical layer.}
		Within epoch $k$, $\theta^{\text{str}}_k$ is fixed, so
		$J^{\text{tac}}_i$ is stationary. Standard SGD under the
		$\mu$-PL condition over $N_f$ steps gives per-epoch average
		gradient norm $\mathcal{O}(\delta_k + 1/\sqrt{K})$.
		Averaging over $K$ epochs and using
		$K^{-1}\sum_{k=1}^K\delta_k
		= K^{-1}\sum_{k=1}^K\mathcal{O}(k^{-1/2})
		= \mathcal{O}(K^{-1/2})$,
		the overall tactical average is $\mathcal{O}(1/\sqrt{K})$,
		completing the proof.
	\end{proof}

\begin{remark}
  Theorem~\ref{thm:convergence} is established for policy gradient
  implementations. The DQN-based experiments in Section~\ref{sec:exp}
  serve as an empirical proof of concept; extending the convergence
  analysis to Q-learning variants is left for future work.
\end{remark}

%%%%%%%%%%%%%%%%%%%%%%%%%%%%%%%%%%%%%%%%%%%%%%%%%%%%%%%%%%%%%%%%
\section{EXPERIMENTAL VALIDATION}
\label{sec:exp}

\subsection{Experiment Setting}

We evaluate CHMAS on a 25$\times$25 GridWorld environment with $N=4$
agents performing cooperative foraging, instantiating the key features
of hierarchical multi-agent coordination: information asymmetry,
temporal abstraction, and bidirectional coupling.

\begin{figure}[h]
  \centering
  \includegraphics[width=0.8\columnwidth]{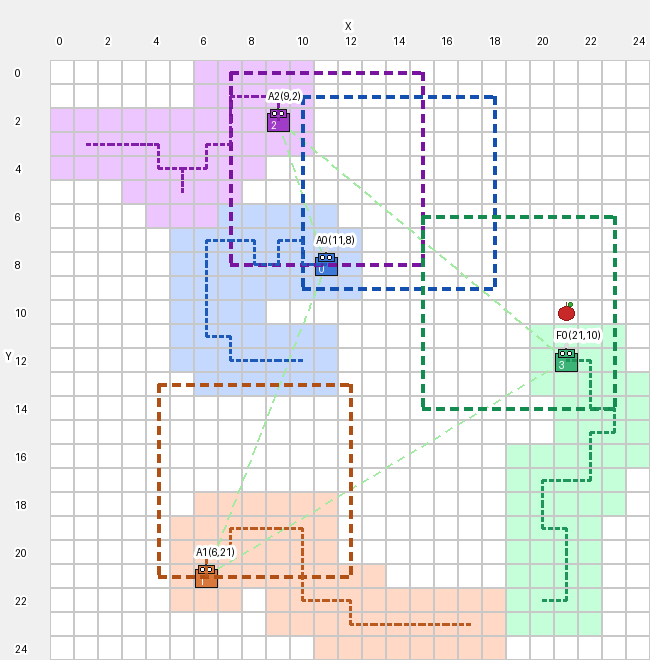}
  \caption{The 25$\times$25 GridWorld environment. Four agents
    (A0-blue, A1-orange, A2-purple, A3-green) operate within their
    assigned 9$\times$9 regions (dashed boxes) determined by strategic
    guidance. Colored areas indicate visited cells tracked in the
    global state $s^{\text{env}}$, while agents pursue local resources
    (red apple).}
  \label{fig:environment}
\end{figure}

The global environmental state $s^{\text{env}} \in \{0,1\}^{625}$
tracks cell visitation, representing system-wide coverage that
individual agents cannot observe. The strategic layer accesses the
complete state $s^{\text{str}} = (s^{\text{env}}, s_1, s_2, s_3, s_4)$, while each tactical agent $i$ observes only local
information $o^{\text{tac}}_i \in \mathbb{R}^{20}$ comprising its
position, nearby resources, and strategic guidance.

Strategic actions assign each agent a 9$\times$9 region center
$a^{\text{str}}_i = (g^x_i, g^y_i) \in [4,20]^2$ at the start of each
epoch ($T=5$ timesteps). Tactical agents select discrete actions
$a^{\text{tac}}_i \in \{0,1,2,3,4,5\}$ (movement, collection, wait).

The reward structure implements bidirectional coupling: tactical agents
receive $r^{\text{tac}}_{i,t} = r^{\text{collect}}_{i,t} - 0.01 -
0.01 \cdot \mathbf{1}[\text{outside region}]$, while the strategic
layer exclusively receives global coverage reward
$R^{\text{global}}_k = 0.01 \times |\text{newly visited cells}|$
computed from $s^{\text{env}}$. Strategic rewards integrate both:
$R^{\text{str}}_k = R^{\text{global}}_k + \lambda \sum_{t,i}
r^{\text{tac}}_{i,t}$ with $\lambda = 0.5$.

The implementation is publicly available at
\url{https://github.com/EricDmWang/Hierarchy_RL.git}.

\subsection{Results and Analysis}

We implement the tactical layer using Multi-Agent DQN (MA-DQN) with
parameter sharing, dueling architecture, double Q-learning, and
prioritized experience replay. The strategic layer employs a centralized
DQN on the global state. Training runs for 400 episodes with 5
independent seeds.

\begin{figure*}[h]
  \centering
  \begin{subfigure}{0.48\columnwidth}
    \includegraphics[width=\textwidth]{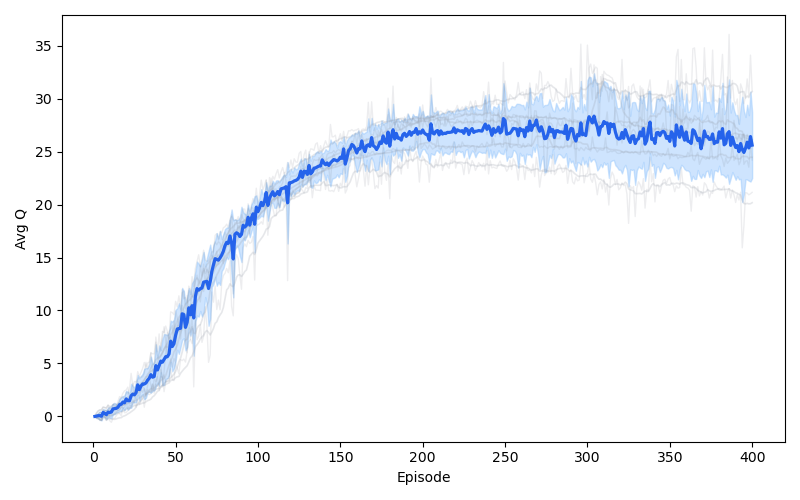}
    \caption{Tactical Q-values}
  \end{subfigure}
  \hfill
  \begin{subfigure}{0.48\columnwidth}
    \includegraphics[width=\textwidth]{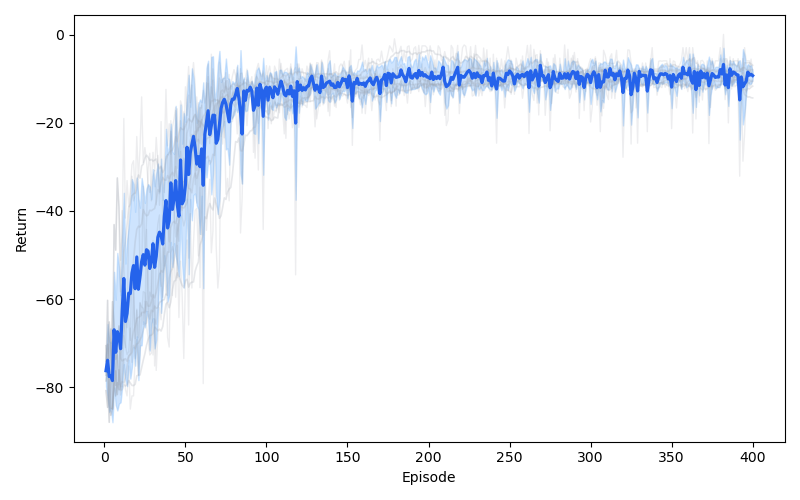}
    \caption{Tactical returns}
  \end{subfigure}
  \begin{subfigure}{0.48\columnwidth}
    \includegraphics[width=\textwidth]{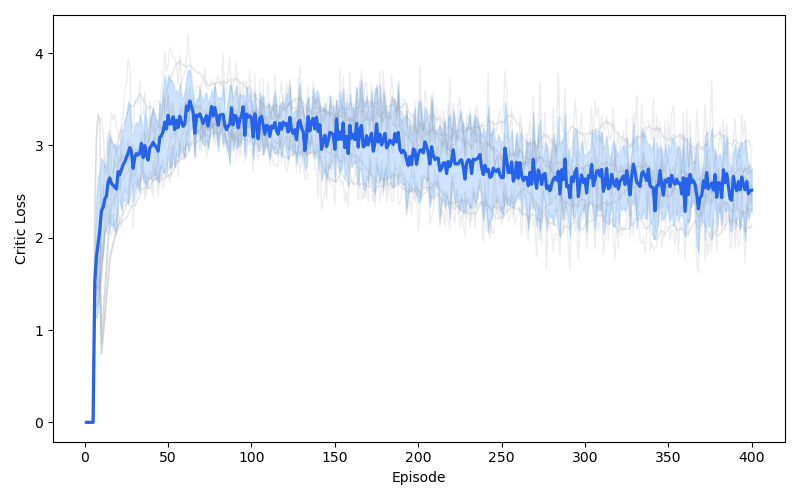}
    \caption{Tactical critic loss}
  \end{subfigure}
  \hfill
  \begin{subfigure}{0.48\columnwidth}
    \includegraphics[width=\textwidth]{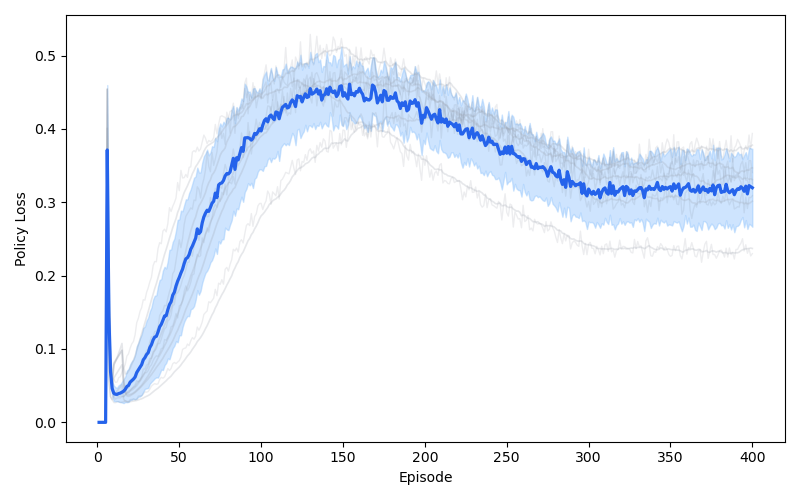}
    \caption{Tactical policy loss}
  \end{subfigure}
  \caption{Tactical layer learning dynamics.}
  \label{fig:tactical_results}
\end{figure*}

Figure~\ref{fig:tactical_results} shows the tactical layer's learning
progression. Q-values (a) grow rapidly before stabilizing near episode
300. Returns (b) improve from approximately $-80$ to $-10$, showing
agents learn efficient resource collection while respecting region
constraints. Critic loss (c) and policy loss (d) peak early during
exploration and then decrease as policies stabilize.

\begin{figure*}[h]
  \centering
  \begin{subfigure}{0.48\columnwidth}
    \includegraphics[width=\textwidth]{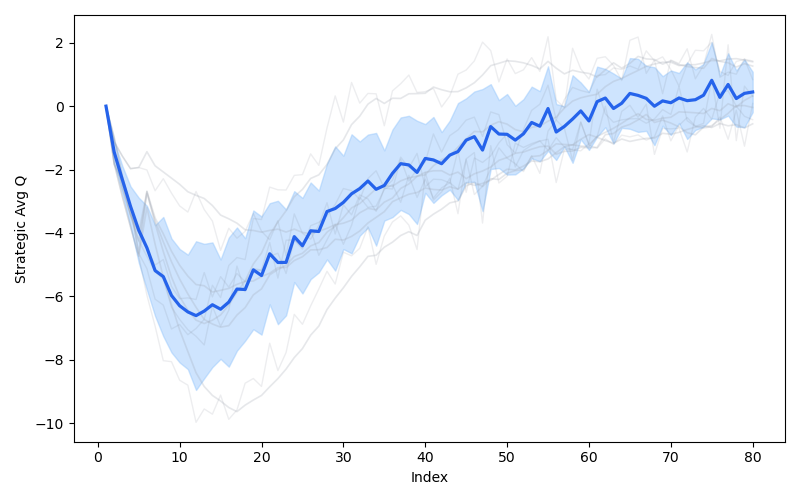}
    \caption{Strategic Q-values}
  \end{subfigure}
  \hfill
  \begin{subfigure}{0.48\columnwidth}
    \includegraphics[width=\textwidth]{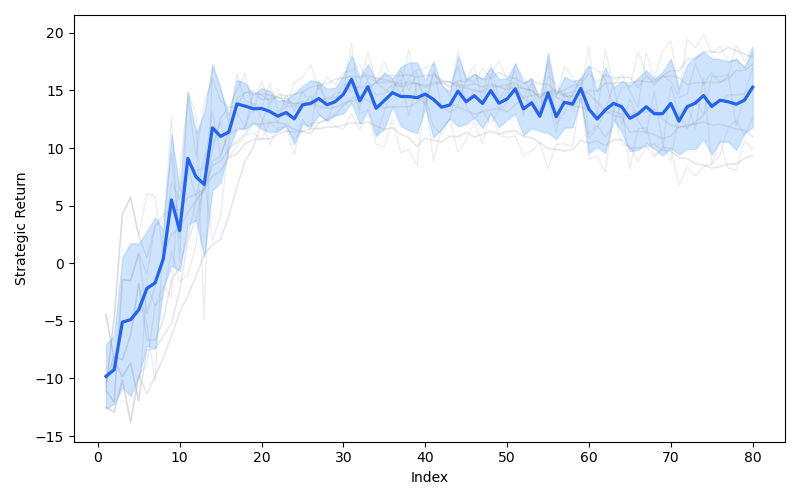}
    \caption{Strategic returns}
  \end{subfigure}
  \begin{subfigure}{0.48\columnwidth}
    \includegraphics[width=\textwidth]{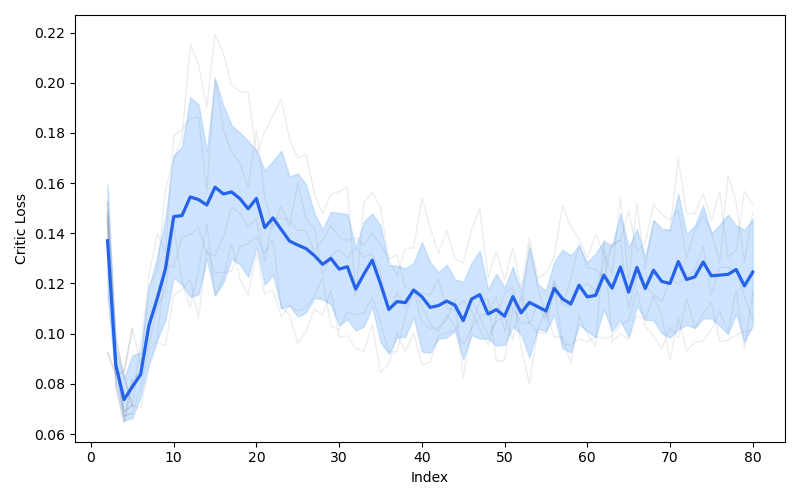}
    \caption{Strategic critic loss}
  \end{subfigure}
  \hfill
  \begin{subfigure}{0.48\columnwidth}
    \includegraphics[width=\textwidth]{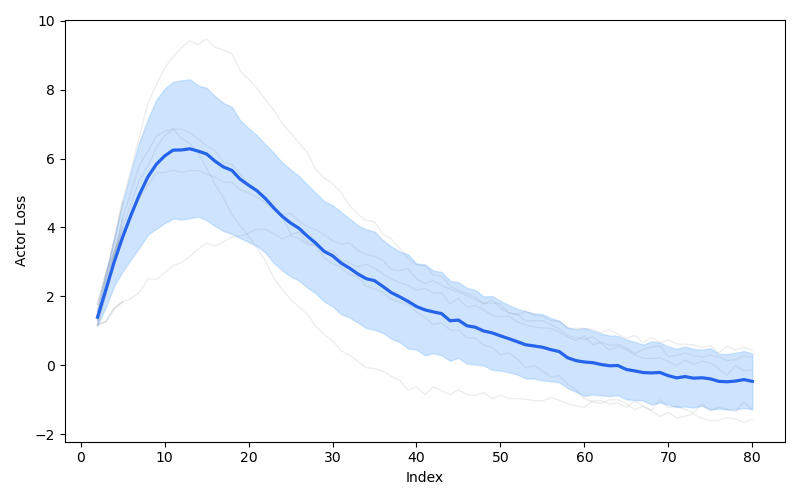}
    \caption{Strategic actor loss}
  \end{subfigure}
  \caption{Strategic layer learning dynamics with $N_f=5$.}
  \label{fig:strategic_results}
\end{figure*}

Figure~\ref{fig:strategic_results} shows the strategic layer's
asynchronous learning with updates every $N_f=5$ episodes. Strategic
Q-values (a) improve from $-8$ to near $0$, and strategic returns (b)
increase from $-10$ to approximately $15$, demonstrating successful
optimization of global coverage objectives. The delayed onset of
learning (first 10--20 episodes) reflects the asynchronous protocol
allowing tactical policies to stabilize before strategic updates
commence. Low variance across runs confirms algorithmic stability.

\subsection{Qualitative Analysis}

Figure~\ref{fig:execution} shows the execution trajectory of a trained
AHPG model over a complete episode. The strategic layer learns to assign
non-overlapping regions that collectively tile the grid, initially
placing agents in corners and then adapting assignments to unexplored
areas. Agents balance region compliance with opportunistic resource
collection. By frame (e), full resource collection with maintained
spatial coordination validates the effectiveness of bidirectional
hierarchical control.

\begin{figure*}[t]
  \centering
  \begin{subfigure}{0.19\textwidth}
    \includegraphics[width=\textwidth]{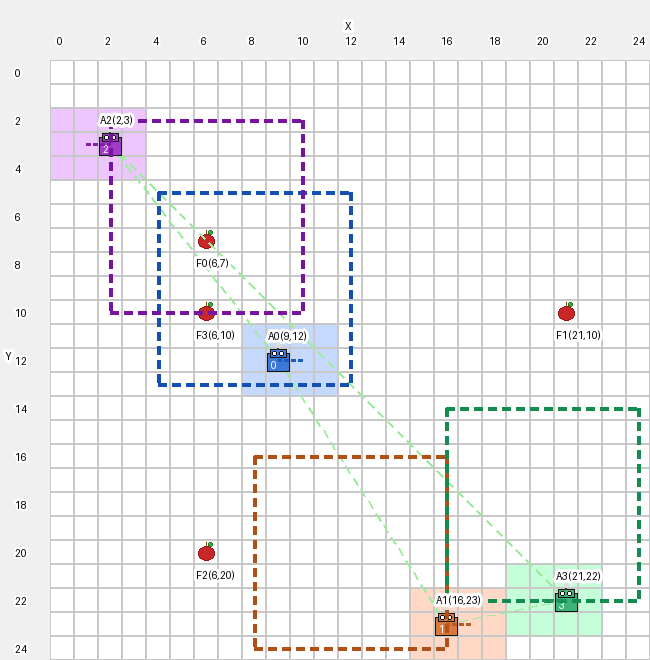}
    \caption{$t=0$}
  \end{subfigure}
  \hfill
  \begin{subfigure}{0.19\textwidth}
    \includegraphics[width=\textwidth]{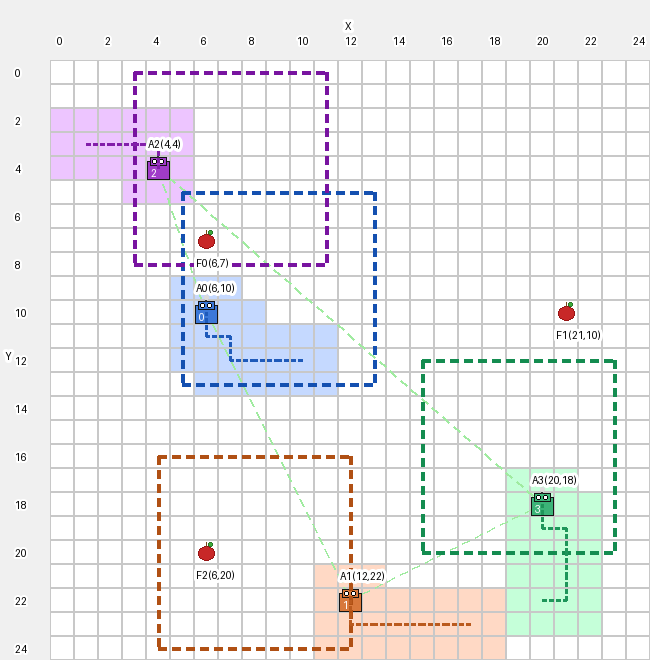}
    \caption{$t=6$}
  \end{subfigure}
  \hfill
  \begin{subfigure}{0.19\textwidth}
    \includegraphics[width=\textwidth]{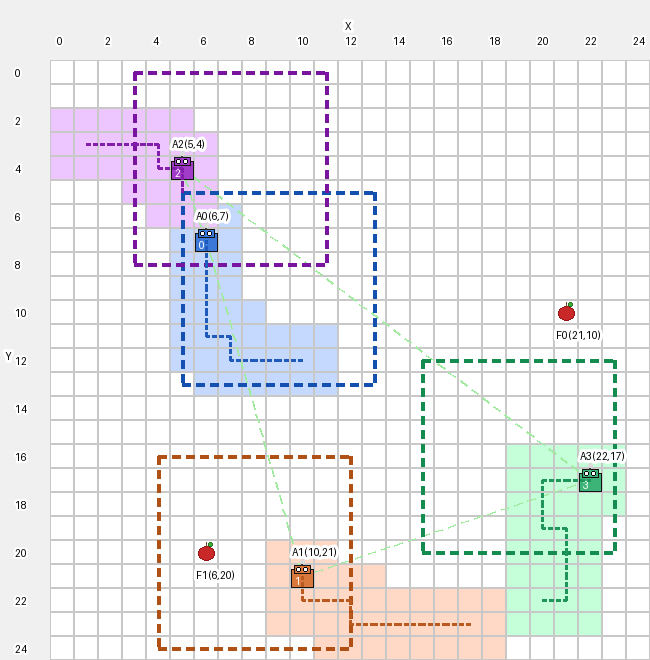}
    \caption{$t=9$}
  \end{subfigure}
  \hfill
  \begin{subfigure}{0.19\textwidth}
    \includegraphics[width=\textwidth]{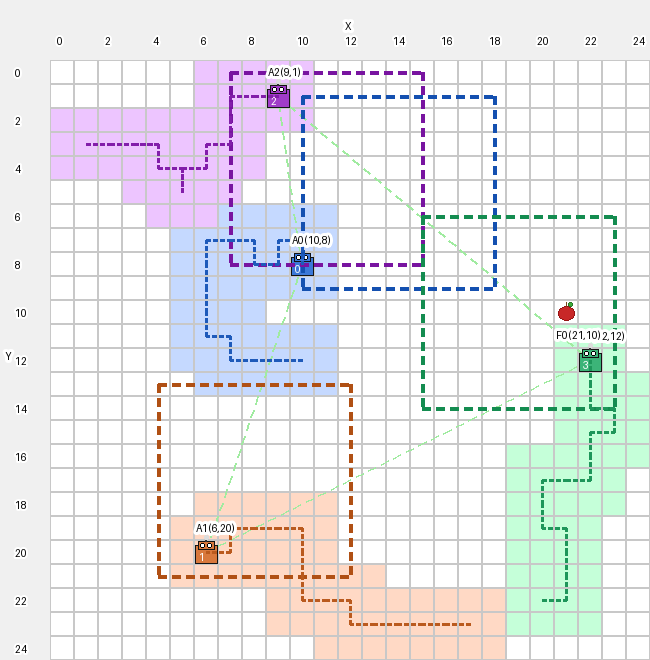}
    \caption{$t=16$}
  \end{subfigure}
  \hfill
  \begin{subfigure}{0.19\textwidth}
    \includegraphics[width=\textwidth]{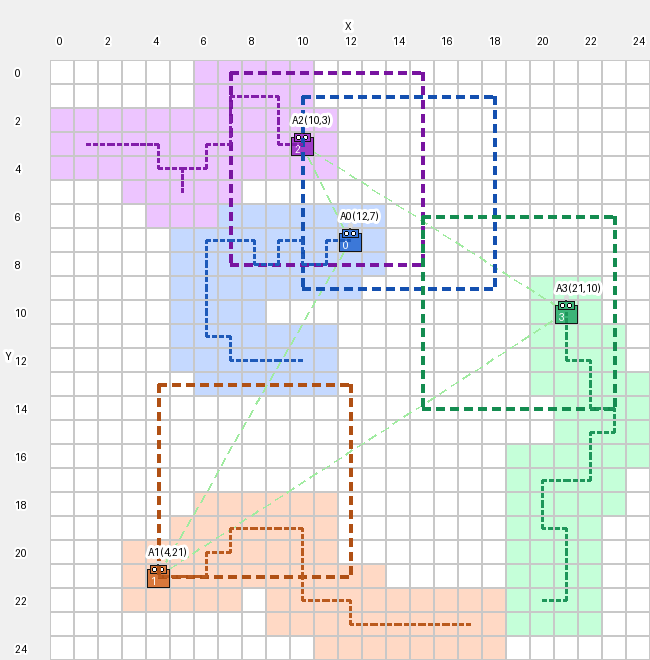}
    \caption{$t=19$}
  \end{subfigure}
  \caption{Evolution of hierarchical coordination during execution.
    Agents operate within their assigned 9$\times$9 regions (dashed
    boxes) while collectively achieving grid coverage. Red apples
    represent resources collected opportunistically.}
  \label{fig:execution}
\end{figure*}

%%%%%%%%%%%%%%%%%%%%%%%%%%%%%%%%%%%%%%%%%%%%%%%%%%%%%%%%%%%%%%%%
\section{CONCLUSION}

This paper introduced CHMAS, a hierarchical MARL framework based on
bidirectional coupling between strategic and tactical layers. By
enabling strategic planning with exclusive global state access while
maintaining distributed tactical execution, CHMAS provides a new
architectural paradigm for coordinating agents across multiple temporal
scales. The asynchronous update protocol, with a decaying strategic
step size schedule, manages the non-stationarity inherent in
hierarchical learning and admits $\mathcal{O}(\log K/\sqrt{K})$
convergence guarantees for the strategic layer.

The bidirectional coupling mechanism offers a principled approach to
constrained reinforcement learning: strategic guidance encodes global
constraints while tactical feedback ensures feasibility, with $\lambda$
providing tunable constraint satisfaction.

%Future work includes
%extending the convergence analysis to Q-learning implementations,
%adding empirical ablations on $\lambda$ and system scale, and applying
%the framework to safety-critical domains such as autonomous vehicle
%coordination.

\bibliographystyle{IEEEtran}
\bibliography{references}

\end{document}